\theoremstyle{plain}
\newtheorem{theorem}{Theorem}
\newtheorem{lemma}[theorem]{Lemma}
\theoremstyle{definition}
\newcommand{\B}{{\mathcal B}}
\newcommand{\M}{{\mathcal M}}
\newcommand{\bA}{{\boldsymbol A}}
\newcommand{\bu}{{\boldsymbol u}}
\newcommand{\bv}{{\boldsymbol v}}
\newcommand{\bz}{{\boldsymbol{z}}}
\newcommand{\fq}{\mathbb{F}_q}
\newcommand{\al}{\alpha}
\newcommand{\weight}{{\mathsf{wt}}}
\newcommand{\nin}{\noindent}
\newcommand{\seq}{\subseteq}
\newcommand{\et}{{\emph{et al.}}}
\newcommand{\pa}{(P,\boldsymbol{A})}
\newcommand{\pas}{(P^*,\boldsymbol{A}^*)}
\newcommand{\dt}{{\mathsf d}_{\sum}}
\newcommand{\dla}{{\mathsf{d}}_{\boldsymbol{A}}}
\newcommand{\dlas}{{\mathsf{d}}_{\boldsymbol{A}^*}}
\newcommand{\nma}{\mathcal{N}(\mathcal{M}, \boldsymbol{A})}
\newcommand{\nmas}{\mathcal{N}(\mathcal{M},\boldsymbol{A}^*)}
\newcommand{\nhj}{(n,\{H_j\}^k_1)}
\newcommand{\bas}{\boldsymbol{A}^*}
\begin{document}
\pagestyle{plain}

\title{Delay Minimization in Varying-Bandwidth Direct Multicast with Side Information}

 \author{
   \IEEEauthorblockN{
     Son Hoang Dau\IEEEauthorrefmark{1},
     Zheng Dong\IEEEauthorrefmark{2}, 
     Chau Yuen\IEEEauthorrefmark{3}
		} 
   \IEEEauthorblockA{
   Singapore University of Technology and Design\\     
		Emails: $\{${\it\IEEEauthorrefmark{1}sonhoang\_dau, 
		\IEEEauthorrefmark{2}dong\_zheng,
		\IEEEauthorrefmark{3}yuenchau}$\}$@sutd.edu.sg		
		}
	\and
  \IEEEauthorblockN{Terence H. Chan}
  \IEEEauthorblockA{
	Institute for Telecommunications Research\\
	University of South Australia\\
	Email: terence.chan@unisa.edu.au
	}
 }
\maketitle

\begin{abstract}
We study the delay minimization in a direct multicast 
communication scheme where a base station 
wishes to transmit a set of original packets to a group of clients.
Each of the clients already has in its cache a subset of the original
packets, and requests for all the remaining packets. 
The base station communicates \emph{directly} with the clients by broadcasting 
information to them. 
Assume that bandwidths vary between the station and different
clients. We propose a method to \emph{minimize} the \emph{total delay} required
for the base station to satisfy requests from all clients.  
\end{abstract}

\section{Introduction}
\label{sec:intro}

We study the issue of delay minimization of the so-called Direct Multicast with Side Information (DMSI) problem. In an instance of this problem, 
a base station 
wishes to transmit a set of $n$ original packets to a group of $k$ clients.
Each of the clients already has in its cache a subset of the original
packets (referred to as \emph{side information}), and requests for all the 
remaining packets. 
The base station communicates \emph{directly} with the clients by broadcasting 
information to them. 
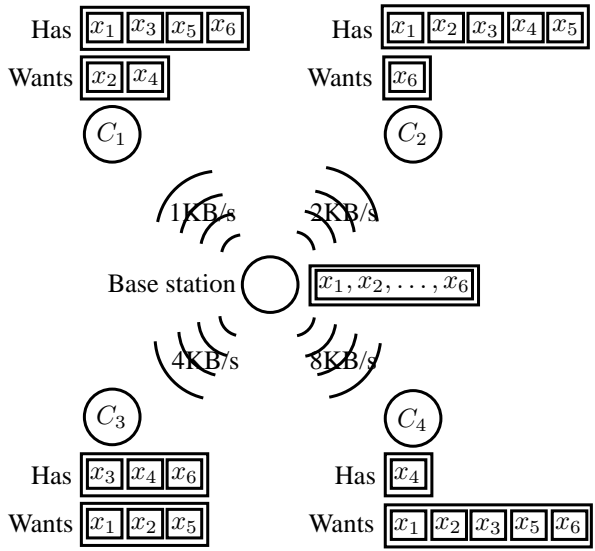
\begin{figure}[H]
\centering
\scalebox{1} 
{
\begin{pspicture}(0,-3.6200001)(7.9414062,3.6200001)
\pscircle[linewidth=0.04,dimen=outer](3.4945304,-0.11){0.39}
\pscircle[linewidth=0.04,dimen=outer](1.3945312,-1.8700001){0.39}
\usefont{T1}{ptm}{m}{n}
\rput(1.3859372,-1.8700001){$C_3$}
\pscircle[linewidth=0.04,dimen=outer](5.3945312,-1.8900001){0.39}
\usefont{T1}{ptm}{m}{n}
\rput(5.385937,-1.9100001){$C_4$}
\usefont{T1}{ptm}{m}{n}
\rput(2.189531,-0.09000021){Base station}
\pscircle[linewidth=0.04,dimen=outer](1.3945312,1.89){0.39}
\usefont{T1}{ptm}{m}{n}
\rput(1.3887496,1.89){$C_1$}
\psframe[linewidth=0.04,dimen=outer](3.2245312,3.6000001)(0.9645312,3.0)
\psframe[linewidth=0.04,dimen=outer](1.5445312,3.5199997)(1.0445312,3.1000001)
\usefont{T1}{ptm}{m}{n}
\rput(1.2767187,3.2900002){$x_1$}
\psframe[linewidth=0.04,dimen=outer](2.0845313,3.5199997)(1.5845312,3.1000001)
\usefont{T1}{ptm}{m}{n}
\rput(1.8209374,3.31){$x_3$}
\usefont{T1}{ptm}{m}{n}
\rput(0.5859375,3.2900002){Has}
\psframe[linewidth=0.04,dimen=outer](2.6045313,3.5199997)(2.1045313,3.1000001)
\usefont{T1}{ptm}{m}{n}
\rput(2.3567188,3.2900002){$x_5$}
\psframe[linewidth=0.04,dimen=outer](3.1445312,3.5199997)(2.6445312,3.1000001)
\usefont{T1}{ptm}{m}{n}
\rput(2.8809373,3.31){$x_6$}
\psframe[linewidth=0.04,dimen=outer](2.1645312,2.94)(0.9645312,2.3400002)
\psframe[linewidth=0.04,dimen=outer](1.5445312,2.8600001)(1.0445312,2.44)
\usefont{T1}{ptm}{m}{n}
\rput(1.2967187,2.63){$x_2$}
\psframe[linewidth=0.04,dimen=outer](2.0845313,2.8600001)(1.5845312,2.44)
\usefont{T1}{ptm}{m}{n}
\rput(1.8479687,2.65){$x_4$}
\usefont{T1}{ptm}{m}{n}
\rput(0.44671875,2.65){Wants}
\pscircle[linewidth=0.04,dimen=outer](5.3945312,1.89){0.39}
\usefont{T1}{ptm}{m}{n}
\rput(5.3887496,1.89){$C_2$}
\psframe[linewidth=0.04,dimen=outer](7.744531,3.6200001)(4.9645314,3.02)
\psframe[linewidth=0.04,dimen=outer](5.5445313,3.5199997)(5.0445313,3.1000001)
\usefont{T1}{ptm}{m}{n}
\rput(5.2767186,3.2900002){$x_1$}
\psframe[linewidth=0.04,dimen=outer](6.0845313,3.5199997)(5.5845313,3.1000001)
\usefont{T1}{ptm}{m}{n}
\rput(5.8209376,3.31){$x_2$}
\usefont{T1}{ptm}{m}{n}
\rput(4.5859375,3.2900002){Has}
\psframe[linewidth=0.04,dimen=outer](6.6045313,3.5199997)(6.1045313,3.1000001)
\usefont{T1}{ptm}{m}{n}
\rput(6.3567185,3.2900002){$x_3$}
\psframe[linewidth=0.04,dimen=outer](7.1445312,3.5199997)(6.6445312,3.1000001)
\usefont{T1}{ptm}{m}{n}
\rput(6.8809376,3.31){$x_4$}
\psframe[linewidth=0.04,dimen=outer](5.6445312,2.94)(4.9845314,2.3400002)
\psframe[linewidth=0.04,dimen=outer](5.5645313,2.8600001)(5.0645313,2.44)
\usefont{T1}{ptm}{m}{n}
\rput(5.2967186,2.63){$x_6$}
\usefont{T1}{ptm}{m}{n}
\rput(4.4267187,2.65){Wants}
\psframe[linewidth=0.04,dimen=outer](7.664531,3.5199997)(7.164531,3.1000001)
\usefont{T1}{ptm}{m}{n}
\rput(7.4209375,3.31){$x_5$}
\psframe[linewidth=0.04,dimen=outer](5.664531,-2.34)(5.0045314,-2.94)
\psframe[linewidth=0.04,dimen=outer](5.5845313,-2.42)(5.0845313,-2.84)
\usefont{T1}{ptm}{m}{n}
\rput(5.3167186,-2.6499999){$x_4$}
\usefont{T1}{ptm}{m}{n}
\rput(4.6259375,-2.6499999){Has}
\usefont{T1}{ptm}{m}{n}
\rput(4.4467187,-3.29){Wants}
\psframe[linewidth=0.04,dimen=outer](7.784531,-3.02)(5.0045314,-3.6200001)
\psframe[linewidth=0.04,dimen=outer](5.5845313,-3.1000001)(5.0845313,-3.52)
\usefont{T1}{ptm}{m}{n}
\rput(5.3167186,-3.33){$x_1$}
\psframe[linewidth=0.04,dimen=outer](6.1245313,-3.1000001)(5.6245313,-3.52)
\usefont{T1}{ptm}{m}{n}
\rput(5.8609376,-3.31){$x_2$}
\psframe[linewidth=0.04,dimen=outer](6.6445312,-3.1000001)(6.1445312,-3.52)
\usefont{T1}{ptm}{m}{n}
\rput(6.396719,-3.33){$x_3$}
\psframe[linewidth=0.04,dimen=outer](7.184531,-3.1000001)(6.684531,-3.52)
\usefont{T1}{ptm}{m}{n}
\rput(6.9209375,-3.31){$x_5$}
\psframe[linewidth=0.04,dimen=outer](2.6845312,-2.34)(0.9645312,-2.94)
\psframe[linewidth=0.04,dimen=outer](1.5445312,-2.42)(1.0445312,-2.84)
\usefont{T1}{ptm}{m}{n}
\rput(1.2767187,-2.6499999){$x_3$}
\psframe[linewidth=0.04,dimen=outer](2.0845313,-2.42)(1.5845312,-2.84)
\usefont{T1}{ptm}{m}{n}
\rput(1.8209374,-2.6499999){$x_4$}
\usefont{T1}{ptm}{m}{n}
\rput(0.5859375,-2.6499999){Has}
\psframe[linewidth=0.04,dimen=outer](2.6045313,-2.42)(2.1045313,-2.84)
\usefont{T1}{ptm}{m}{n}
\rput(2.3567188,-2.6499999){$x_6$}
\usefont{T1}{ptm}{m}{n}
\rput(0.44671875,-3.29){Wants}
\psframe[linewidth=0.04,dimen=outer](2.6845312,-3.0)(0.9645312,-3.6000001)
\psframe[linewidth=0.04,dimen=outer](1.5445312,-3.0800002)(1.0445312,-3.5)
\usefont{T1}{ptm}{m}{n}
\rput(1.2767187,-3.31){$x_1$}
\psframe[linewidth=0.04,dimen=outer](2.0845313,-3.0800002)(1.5845312,-3.5)
\usefont{T1}{ptm}{m}{n}
\rput(1.8209374,-3.31){$x_2$}
\psframe[linewidth=0.04,dimen=outer](2.6045313,-3.0800002)(2.1045313,-3.5)
\usefont{T1}{ptm}{m}{n}
\rput(2.3567188,-3.31){$x_5$}
\psframe[linewidth=0.04,dimen=outer](6.284531,0.16000006)(4.0045314,-0.39999995)
\psframe[linewidth=0.04,dimen=outer](6.204531,0.09999985)(4.0845313,-0.31999996)
\usefont{T1}{ptm}{m}{n}
\rput(5.146719,-0.10999985){$x_1, x_2, \ldots, x_6$}
\usefont{T1}{ptm}{m}{n}
\rput(2.5971875,0.85){1KB/s}
\rput{76.81656}(1.9001064,-3.4188836){\psarc[linewidth=0.04](3.1061919,-0.5111286){0.28}{106.92751}{180.0}}
\rput{76.81656}(1.7793007,-3.3973324){\psarc[linewidth=0.04](3.0321977,-0.57653993){0.49}{106.92751}{180.0}}
\rput{76.81656}(1.5656121,-3.372439){\psarc[linewidth=0.04](2.9096541,-0.6988571){0.63}{106.92751}{180.0}}
\rput{76.81656}(1.3685316,-3.3461447){\psarc[linewidth=0.04](2.7945313,-0.8099999){0.83}{106.92751}{180.0}}
\rput{173.1648}(7.517945,-1.4755596){\psarc[linewidth=0.04](3.803032,-0.51329684){0.28}{106.92751}{180.0}}
\rput{173.1648}(7.6559167,-1.6164148){\psarc[linewidth=0.04](3.8762238,-0.5796046){0.49}{106.92751}{180.0}}
\rput{173.1648}(7.912305,-1.8482611){\psarc[linewidth=0.04](4.0113406,-0.6878721){0.63}{106.92751}{180.0}}
\rput{173.1648}(8.145656,-2.066452){\psarc[linewidth=0.04](4.1345315,-0.7899999){0.83}{106.92751}{180.0}}
\rput{260.05392}(4.1473436,4.111372){\psarc[linewidth=0.04](3.800246,0.31400552){0.28}{106.92751}{180.0}}
\rput{260.05392}(4.1612062,4.2619867){\psarc[linewidth=0.04](3.870428,0.38349104){0.49}{106.92751}{180.0}}
\rput{260.05392}(4.169483,4.527023){\psarc[linewidth=0.04](3.9858687,0.5125334){0.63}{106.92751}{180.0}}
\rput{260.05392}(4.181213,4.771808){\psarc[linewidth=0.04](4.094531,0.6300001){0.83}{106.92751}{180.0}}
\rput{349.34924}(0.0036259675,0.5826009){\psarc[linewidth=0.04](3.1268826,0.27185076){0.28}{106.92751}{180.0}}
\rput{349.34924}(-0.010684349,0.5711427){\psarc[linewidth=0.04](3.0582654,0.34288216){0.49}{106.92751}{180.0}}
\rput{349.34924}(-0.034510553,0.54957294){\psarc[linewidth=0.04](2.9306526,0.459901){0.63}{106.92751}{180.0}}
\rput{349.34924}(-0.05685986,0.53000796){\psarc[linewidth=0.04](2.8145313,0.5700001){0.83}{106.92751}{180.0}}
\usefont{T1}{ptm}{m}{n}
\rput(4.474375,0.85){2KB/s}
\usefont{T1}{ptm}{m}{n}
\rput(4.467344,-1.1099999){8KB/s}
\usefont{T1}{ptm}{m}{n}
\rput(2.6354687,-1.1099999){4KB/s}
\psframe[linewidth=0.04,dimen=outer](7.724531,-3.1000001)(7.224531,-3.52)
\usefont{T1}{ptm}{m}{n}
\rput(7.4609375,-3.31){$x_6$}
\end{pspicture} 
}
\caption{An example of Direct Multicast with Side Information}
\label{fig:dmsi_example}
\end{figure} 
\vspace{-5pt}
Such a scenario is usually observed in opportunistic wireless networks
\cite{Katti2006,Katti2008}, where wireless nodes often opportunistically overhear
packets that are not designated to them. These overheard packets become
the side information for the nodes. 
This problem also arises in communication schemes where a server has
to broadcast a set of packets to a group of clients. 
Limited storage capacity, bad reception, or signal degradation might
lead to packet loss at the clients. Using a slow feedback channel,
the clients inform the server about their missing packets, and request
for retransmissions \cite{BirkKol2006}.      

In our model, each packet that is transmitted from the base station, 
referred to as a \emph{broadcast} packet, is a linear combination of the \emph{original} packets.
Assume that bandwidths \emph{vary} between the base station and different
clients, and that each broadcast packet is designated for (in other words, assigned to) 
\emph{a subgroup} of clients.  
The \emph{delay} of a broadcast packet is defined to be the amount of time that 
a client (to which the packet is assigned) with a minimum bandwidth 
can receive the packet successfully. 
\emph{Our main contribution} is to provide a method to \emph{minimize} 
the \emph{total delay} required
for the base station to satisfy requests from all clients.  
We design an optimal packet assignment so as to achieve the minimum total delay.
Moreover, the multicast scheme with optimal total delay can be found in 
polynomial time in $n$ and $k$. 

\nin {\bf A motivational example.}
Suppose that there are four clients $C_1$, $C_2$, $C_3$, $C_4$, 
which miss $2$, $1$, $3$, $5$ original packets, respectively, as given in
Fig~\ref{fig:dmsi_example}.
By a well-known result in network coding (see 
Section~\ref{sec:feasibility} for more details), provided that 
\begin{itemize}
	\item the base station broadcasts at least $5$ packets to the clients, and
	\item the number of broadcast packets designated for each client is as many as 
	the number of its missing packets,  
\end{itemize}
then there is a coding scheme for the base station to satisfy demands 
from all clients simultaneously. Assume that each broadcast packet is of size $8$KB
and that the bandwidth and the packet delay from each client are given
the table in Fig.~\ref{fig:delay-table}. 
Note that the delay is obtained by dividing the packet
size by the bandwidth between the corresponding client and the base station. 
\vspace{-5pt}
\begin{figure}[ht]
\centering
		\begin{tabular}{|c|c|c|c|c|}
		\hline 
		& $C_1$ & $C_2$ & $C_3$ & $C_4$\\
		\hline 
		Bandwidth (KB/sec)& $1$ & $2$ & $4$ & $8$\\
		\hline 		
		Delay (sec)& $8$ & $4$ & $2$ & $1$\\
		\hline
		\end{tabular}
		\caption{Bandwidths and delays for clients}
   \label{fig:delay-table}
\vspace{-5pt}
\end{figure}
Suppose that the base station uses five broadcast packets $p_1$,$\ldots$, $p_5$.
Consider the Packet Assignment A, given in Fig.~\ref{fig:paa}, and
the Packet Assignment B, given in Fig.~\ref{fig:pab}.
The total delay of the Packet Assignment B ($20$ seconds) is $4$ seconds less than the
total delay of the Packet Assignment A ($24$ seconds). 
In fact, in Section~\ref{sec:optimality}, we can see that Packet Assignment B is actually
optimal in terms of the total delay for this scenario. 
The intuition is that the total delay gets smaller if fewer broadcast packets are assigned to 
more clients with large delays. This is proved later to be true. 
\vspace{-5pt}
\begin{figure}[ht]
	\centering
		\begin{tabular}{c|c|c|c|c|c|}
		& $C_1$ & $C_2$ & $C_3$ & $C_4$ & Packet delay (sec)\\
		\hline
		$p_1$ & 1 & & 1 & 1 & 8\\ 
		\hline
		$p_2$ &  & 1 &  & 1 & 4\\
		\hline
		$p_3$ & 1 & & & 1 & 8\\
		\hline
		$p_4$ & & & 1 & 1 & 2\\
		\hline
		$p_5$ & & & 1 & 1 & 2\\
		\hline
		Total delay & & & & & {\bf 24}\\
		\hline
		\end{tabular}
		\caption{Packet Assignment A. A $1$-entry means the 
		broadcast packet in that row is assigned to the client in that column. 
		The delay of a broadcast packet is the maximum delay from all clients
	 to which the packet is assigned.}
		\label{fig:paa}
\end{figure}
\vspace{-20pt}
\begin{figure}[ht]
	\centering
		\begin{tabular}{c|c|c|c|c|c|}
		& $C_1$ & $C_2$ & $C_3$ & $C_4$ & Packet delay (sec)\\
		\hline
		$p_1$ & 1 & 1 & 1 & 1 & 8\\ 
		\hline
		$p_2$ & 1  &  & 1  & 1 & 8\\
		\hline
		$p_3$ &  & & 1 & 1 & 2\\
		\hline
		$p_4$ & & &  & 1 & 1\\
		\hline
		$p_5$ & & &  & 1 & 1\\
		\hline
		Total delay & & & & & {\bf 20}\\
		\hline
		\end{tabular}
		\caption{Packet Assignment B}
		\label{fig:pab}
\end{figure}
\vspace{-10pt}

\nin {\bf Related work.}
The DMSI problem is 
a special case of
the Multicast with Side Information (MSI) problem~\cite{BakshiEffros2008}. 
In an MSI instance, there is a network between the base station and the clients. 
Our problem considers the
scenario where the only communication links are those between the base
station and the clients. 
However, the issue of delay minimization is not investigated in~\cite{BakshiEffros2008}.  

Lun {\et}~\cite{Lun-etal2006} study the problem of cost minimization 
for a general multicast network. In their setting, each vector of rates $\bz$ 
at which packets are injected into edges of the network corresponds to
a cost $f(\bz)$. The goal is to find $\bz$ that minimizes $f(\bz)$. 
The main difference between our result and the result in~\cite{Lun-etal2006}
is the following.
The authors in~\cite{Lun-etal2006} investigate asymptotic solutions
with infinite block length codes; in other words, they consider divisible packets 
with infinitely many subpackets (the non-integral setting). 
In this work, we are only interested in network codes of block length one; 
in other words, we only consider indivisible packets (the integral setting).
From a practical point of view, solutions to the integral setting are often 
preferred due to its simplicity in implementation, lower complexity in 
computation, and smaller buffer required at clients.  
In general, the integral setting might be harder to tackle than the non-integral
setting (for instance linear programming can be solved in polynomial time, 
whereas integer linear programming is NP-hard). 
However, in our case, because of the special objective function
(the total delay), the optimal solution for the integral setting can be found
in polynomial time.   

\nin{\bf Organization.} We formulate our problem rigorously in 
Section~\ref{sec:probdef}. A necessary and sufficient condition 
for the feasibility of a multicast scheme is provided in Section~\ref{sec:feasibility}
(Lemma~\ref{lem:feasibility}). In Section~\ref{sec:optimality}, 
we construct a feasible multicast scheme and prove that it has minimum 
total delay (Lemma~\ref{lem:optimality}, Theorem~\ref{thm:main}). 

\section{Problem Definition}
\label{sec:probdef}

A Direct Multicast with Side Information (DMSI) instance is described 
as follows. 
A base station $S$ has a set of $n$ \emph{original} packets
$X = \{x_1, \ldots, x_n\}$, where $x_i \in \fq$, $i \in [n]$. 
There are $k$ clients $C_1, \ldots, C_k$. 
For each $j \in [k]$, the client $C_j$ possesses a subset of original packets $H_j
\seq X$ as side information, and demands all missing packets in $X \setminus H_j$. 
We abbreviate such a DMSI instance by $\M = (n,\{H_j\}^k_1)$. 

A \emph{multicast scheme} for the instance $\M=\nhj$
is a $2$-tuple $\pa$ where
\begin{itemize}
	\item $P = \{p_1,\ldots, p_m\}$ is a set of \emph{broadcast packets}, i.e. linear combinations
	of the original packets, that the	base station broadcasts to the clients,  
	\item $\bA = (a_{i,j})$ is an $m \times k$ binary matrix, where $a_{i,j} = 1$ 
	if and only if the broadcast packet $p_i$ is assigned to the client $C_j$, for $i \in [m]$, $j \in [k]$.  
\end{itemize}
We refer to $\bA$ as the (packet) \emph{assignment matrix}. The assignment matrix determines which clients a broadcast packet
is assigned to. 
A multicast scheme is \emph{feasible} if upon receiving all designated broadcast packets, 
each client can retrieve all missing original packets. 

We assume that the client $C_j$ $(j \in [k])$ requires $d_j$ seconds to receipt
a broadcast packet (assigned to it) successfully. We refer to $d_j$ as the \emph{delay} 
from $C_j$ $(j \in [k])$. 
Furthermore, suppose that after broadcasting a packet $p_i \in P$, the base station can start sending
another packet only when all clients that $p_i$ is designated for already receive $p_i$
successfully. We define the \emph{delay} of the packet $p_i$ according to the 
assignment matrix $\bA$ by
\vspace{-8pt}
\begin{equation}
\label{eq:packet_delay}
\dla(p_i) = \max \{d_j: \ a_{i,j} = 1\}.  
\vspace{-3pt}
\end{equation} 
Note that the base station must transmit $p_i$ at the minimum rate among all designated clients so that the client with the smallest bandwidth can manage to decode the packet.
Therefore, the largest delay among the designated clients is the bottleneck and 
dominates the delay for that broadcast packet transmission. 
Therefore, $\dla(p_i)$ is the amount of time required for the broadcast packet $p_i$ to be successfully
received by all designated clients.
We define the \emph{total delay} of a multicast scheme $\pa$ by
\vspace{-8pt}
\begin{equation} 
\label{eq:total_delay}
\dt\pa = \sum_{i = 1}^m \dla(p_i).  
\vspace{-3pt}
\end{equation}  
Notice that the total delay can be determined solely from the assignment matrix. 
Therefore, sometimes we use $\dt(\bA)$ instead of $\dt\pa$.
\emph{Our goal} is to find a feasible multicast scheme with 
\emph{minimum total delay}, for a given DMSI instance. 

As an illustrative example, we consider the DMSI instance 
as described in Fig.~\ref{fig:dmsi_example}. 
In this example, $n = 6$, $k = 4$, and the side information at the clients are 
given below.
\vspace{-5pt}
\[
\begin{split}
H_1 &= \{x_1,x_3,x_5,x_6\}, \ H_2 = \{x_1,x_2,x_3,x_4,x_5\},\\
H_3 &= \{x_3,x_4,x_6\}, \ H_4 = \{x_4\}. 
\end{split}
\]
The Packet Assignment~A and~B in Fig.~\ref{fig:paa} and~\ref{fig:pab} can be incorporated
into multicast schemes $\pa$ and $\pas$, respectively, where the assignment matrices are 
given in Fig.~\ref{fig:matrices}. 
In Section~\ref{sec:optimality}, we show how to determine the packets in $P$ and $P^*$ 
so that $\pa$ and $\pas$ are feasible multicast schemes for $\M$.
\vspace{-10pt}
\begin{figure}[ht]
\centering
\subfloat[]{
$ 
\bA =
\begin{pmatrix}
1 & 0 & 1 & 1\\
0 & 1 & 0 & 1\\
1 & 0 & 0 & 1\\
0 & 0 & 1 & 1\\
0 & 0 & 1 & 1
\end{pmatrix},
$
\label{mat:A}
}
\subfloat[]{
$ 
\bas =
\begin{pmatrix}
1 & 1 & 1 & 1\\
1 & 0 & 1 & 1\\
0 & 0 & 1 & 1\\
0 & 0 & 0 & 1\\
0 & 0 & 0 & 1
\end{pmatrix}.
$
\label{mat:Astar}
}
\caption{The assignment matrices $\bA$ and $\bas$}
\label{fig:matrices}
\vspace{-8pt}
\end{figure}
Regarding the packet delay, let us examine the third broadcast packet $p_3$,
which is designated for $C_3$ and $C_4$, according to $\bas$.  
The delay from these two clients are $2=8/4$ seconds and $1=8/8$ seconds, 
respectively. Therefore, 
\vspace{-3pt}
\[
\dlas(p_3) = \max\{2,1\} = 2.
\vspace{-3pt}
\] 
The total delay of the matrix $\bas$ is calculated as follows. 
\vspace{-3pt}
\[
\begin{split} 
\dt(\bas) &= \sum_{i = 1}^5 \dlas(p_i)\\
&= \max\{8, 4, 2, 1\} + \max\{8, 2, 1\} + \max\{2, 1\}\\
&\quad + \max\{1\} + \max\{1\}\\
&= 8 + 8 + 2 + 1 + 1 = 20.
\end{split}
\]
\section{Feasibility of a Multicast Scheme via Network Coding}
\label{sec:feasibility}

In this section, we establish a necessary and sufficient condition for the 
feasibility of a multicast scheme for DMSI via network coding. 

Hereafter, let $\M = \nhj$ be a DMSI instance. 
For an $m \times k$ binary matrix $\bA$, we define
the network $\nma$ as follows. The set of nodes of $\nma$ consists of
\begin{itemize}
	\item one source node $s$, which possesses all original packets $x_1, \ldots, x_n$,
	\item $n$ ``original packet" nodes $s_1, \ldots, s_n$, each corresponds
		to an original packet,
	\item $m$ intermediate nodes $u_1, \ldots, u_m$,
	\item $m$ ``broadcast packet" nodes $v_1, \ldots, v_m$,	
	\item $k$ sinks $t_1, t_2, \ldots, t_k$, each corresponds to a client and demands
	all original packets. 
\end{itemize}
The set of (directed) edges of $\nma$ consists of
\begin{itemize}
	\item $(s,s_i)$ with capacity one for all $i \in [n]$,
	\item $(s_i,t_j)$ with capacity infinity if and only if $x_i \in H_j$,
	\item $(s_i,u_h)$ with capacity infinity for all $i \in [n]$, $h \in [m]$,
  \item $(u_h, v_h)$ with capacity one for every $h \in [m]$,
	\item $(v_h,t_j)$ with capacity one if and only if $a_{h,j} = 1$.
\end{itemize}
As an illustrative example, the network $\nmas$, where $\M$ is given in Fig.~\ref{fig:dmsi_example}
and $\bas$ is given in Fig.~\ref{mat:Astar}, is depicted in Fig.~\ref{fig:network_example}
and Fig.~\ref{fig:si_edges}. 
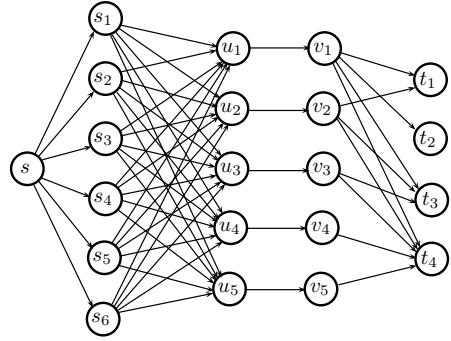
\begin{figure}[H]
\centering
\scalebox{0.8} 
{
\begin{pspicture}(0,-2.79)(7.443281,2.79)
\pscircle[linewidth=0.04,dimen=outer](0.29,0.0){0.29}
\usefont{T1}{ptm}{m}{n}
\rput(0.2528125,0.0){$s$}
\pscircle[linewidth=0.04,dimen=outer](1.59,2.5){0.29}
\usefont{T1}{ptm}{m}{n}
\rput(1.5528125,2.5){$s_1$}
\pscircle[linewidth=0.04,dimen=outer](1.59,1.5){0.29}
\usefont{T1}{ptm}{m}{n}
\rput(1.5528125,1.5){$s_2$}
\pscircle[linewidth=0.04,dimen=outer](1.59,0.5){0.29}
\usefont{T1}{ptm}{m}{n}
\rput(1.5528125,0.5){$s_3$}
\pscircle[linewidth=0.04,dimen=outer](1.59,-0.5){0.29}
\usefont{T1}{ptm}{m}{n}
\rput(1.5528125,-0.5){$s_4$}
\pscircle[linewidth=0.04,dimen=outer](1.57,-1.48){0.29}
\usefont{T1}{ptm}{m}{n}
\rput(1.5328126,-1.48){$s_5$}
\pscircle[linewidth=0.04,dimen=outer](1.55,-2.5){0.29}
\usefont{T1}{ptm}{m}{n}
\rput(1.5128125,-2.5){$s_6$}
\pscircle[linewidth=0.04,dimen=outer](3.71,2.0){0.29}
\usefont{T1}{ptm}{m}{n}
\rput(3.6928124,2.0){$u_1$}
\pscircle[linewidth=0.04,dimen=outer](3.69,1.0){0.29}
\usefont{T1}{ptm}{m}{n}
\rput(3.6728125,1.0){$u_2$}
\pscircle[linewidth=0.04,dimen=outer](3.69,0.0){0.29}
\usefont{T1}{ptm}{m}{n}
\rput(3.6728125,0.0){$u_3$}
\pscircle[linewidth=0.04,dimen=outer](3.67,-0.98){0.29}
\usefont{T1}{ptm}{m}{n}
\rput(3.6528125,-0.98){$u_4$}
\pscircle[linewidth=0.04,dimen=outer](3.65,-2.0){0.29}
\usefont{T1}{ptm}{m}{n}
\rput(3.6328125,-2.0){$u_5$}
\pscircle[linewidth=0.04,dimen=outer](5.23,2.0){0.29}
\usefont{T1}{ptm}{m}{n}
\rput(5.2128124,2.0){$v_1$}
\pscircle[linewidth=0.04,dimen=outer](5.21,1.0){0.29}
\usefont{T1}{ptm}{m}{n}
\rput(5.1928124,1.0){$v_2$}
\pscircle[linewidth=0.04,dimen=outer](5.21,0.0){0.29}
\usefont{T1}{ptm}{m}{n}
\rput(5.1928124,0.0){$v_3$}
\pscircle[linewidth=0.04,dimen=outer](5.19,-0.98){0.29}
\usefont{T1}{ptm}{m}{n}
\rput(5.1728125,-0.98){$v_4$}
\pscircle[linewidth=0.04,dimen=outer](5.17,-2.0){0.29}
\usefont{T1}{ptm}{m}{n}
\rput(5.1528125,-2.0){$v_5$}
\pscircle[linewidth=0.04,dimen=outer](6.99,1.48){0.29}
\usefont{T1}{ptm}{m}{n}
\rput(6.9528127,1.48){$t_1$}
\pscircle[linewidth=0.04,dimen=outer](6.99,0.5){0.29}
\usefont{T1}{ptm}{m}{n}
\rput(6.9528127,0.5){$t_2$}
\pscircle[linewidth=0.04,dimen=outer](7.01,-0.52){0.29}
\usefont{T1}{ptm}{m}{n}
\rput(6.9928126,-0.52){$t_3$}
\pscircle[linewidth=0.04,dimen=outer](7.01,-1.5){0.29}
\usefont{T1}{ptm}{m}{n}
\rput(6.9728127,-1.5){$t_4$}
\psline[linewidth=0.02cm,arrowsize=0.05291667cm 2.0,arrowlength=1.4,arrowinset=0.4]{->}(0.42,0.25)(1.4,2.29)
\psline[linewidth=0.02cm,arrowsize=0.05291667cm 2.0,arrowlength=1.4,arrowinset=0.4]{->}(0.48,0.21)(1.4,1.29)
\psline[linewidth=0.02cm,arrowsize=0.05291667cm 2.0,arrowlength=1.4,arrowinset=0.4]{->}(0.54,0.15)(1.36,0.37)
\psline[linewidth=0.02cm,arrowsize=0.05291667cm 2.0,arrowlength=1.4,arrowinset=0.4]{->}(0.52,-0.15)(1.34,-0.47)
\psline[linewidth=0.02cm,arrowsize=0.05291667cm 2.0,arrowlength=1.4,arrowinset=0.4]{->}(0.48,-0.21)(1.38,-1.33)
\psline[linewidth=0.02cm,arrowsize=0.05291667cm 2.0,arrowlength=1.4,arrowinset=0.4]{->}(0.4,-0.27)(1.38,-2.31)
\psline[linewidth=0.02cm,arrowsize=0.05291667cm 2.0,arrowlength=1.4,arrowinset=0.4]{->}(1.84,2.39)(3.46,2.03)
\psline[linewidth=0.02cm,arrowsize=0.05291667cm 2.0,arrowlength=1.4,arrowinset=0.4]{->}(1.82,2.33)(3.5,1.17)
\psline[linewidth=0.02cm,arrowsize=0.05291667cm 2.0,arrowlength=1.4,arrowinset=0.4]{->}(1.76,2.27)(3.52,0.21)
\psline[linewidth=0.02cm,arrowsize=0.05291667cm 2.0,arrowlength=1.4,arrowinset=0.4]{->}(1.7,2.23)(3.54,-0.75)
\psline[linewidth=0.02cm,arrowsize=0.05291667cm 2.0,arrowlength=1.4,arrowinset=0.4]{->}(1.66,2.25)(3.52,-1.81)
\psline[linewidth=0.02cm,arrowsize=0.05291667cm 2.0,arrowlength=1.4,arrowinset=0.4]{->}(1.84,1.61)(3.44,1.97)
\psline[linewidth=0.02cm,arrowsize=0.05291667cm 2.0,arrowlength=1.4,arrowinset=0.4]{->}(1.86,1.47)(3.46,1.03)
\psline[linewidth=0.02cm,arrowsize=0.05291667cm 2.0,arrowlength=1.4,arrowinset=0.4]{->}(1.82,1.35)(3.48,0.13)
\psline[linewidth=0.02cm,arrowsize=0.05291667cm 2.0,arrowlength=1.4,arrowinset=0.4]{->}(1.78,1.27)(3.48,-0.85)
\psline[linewidth=0.02cm,arrowsize=0.05291667cm 2.0,arrowlength=1.4,arrowinset=0.4]{->}(1.72,1.25)(3.46,-1.83)
\psline[linewidth=0.02cm,arrowsize=0.05291667cm 2.0,arrowlength=1.4,arrowinset=0.4]{->}(1.84,0.63)(3.54,1.79)
\psline[linewidth=0.02cm,arrowsize=0.05291667cm 2.0,arrowlength=1.4,arrowinset=0.4]{->}(1.86,0.55)(3.44,0.91)
\psline[linewidth=0.02cm,arrowsize=0.05291667cm 2.0,arrowlength=1.4,arrowinset=0.4]{->}(1.86,0.39)(3.42,0.03)
\psline[linewidth=0.02cm,arrowsize=0.05291667cm 2.0,arrowlength=1.4,arrowinset=0.4]{->}(1.82,0.33)(3.46,-0.89)
\psline[linewidth=0.02cm,arrowsize=0.05291667cm 2.0,arrowlength=1.4,arrowinset=0.4]{->}(1.76,0.31)(3.44,-1.87)
\psline[linewidth=0.02cm,arrowsize=0.05291667cm 2.0,arrowlength=1.4,arrowinset=0.4]{->}(1.76,-0.27)(3.56,1.77)
\psline[linewidth=0.02cm,arrowsize=0.05291667cm 2.0,arrowlength=1.4,arrowinset=0.4]{->}(1.8,-0.31)(3.48,0.87)
\psline[linewidth=0.02cm,arrowsize=0.05291667cm 2.0,arrowlength=1.4,arrowinset=0.4]{->}(1.86,-0.43)(3.44,-0.11)
\psline[linewidth=0.02cm,arrowsize=0.05291667cm 2.0,arrowlength=1.4,arrowinset=0.4]{->}(1.88,-0.53)(3.42,-0.97)
\psline[linewidth=0.02cm,arrowsize=0.05291667cm 2.0,arrowlength=1.4,arrowinset=0.4]{->}(1.78,-0.67)(3.4,-1.91)
\psline[linewidth=0.02cm,arrowsize=0.05291667cm 2.0,arrowlength=1.4,arrowinset=0.4]{->}(1.7,-1.23)(3.62,1.77)
\psline[linewidth=0.02cm,arrowsize=0.05291667cm 2.0,arrowlength=1.4,arrowinset=0.4]{->}(1.76,-1.27)(3.56,0.77)
\psline[linewidth=0.02cm,arrowsize=0.05291667cm 2.0,arrowlength=1.4,arrowinset=0.4]{->}(1.8,-1.33)(3.52,-0.21)
\psline[linewidth=0.02cm,arrowsize=0.05291667cm 2.0,arrowlength=1.4,arrowinset=0.4]{->}(1.84,-1.45)(3.44,-1.11)
\psline[linewidth=0.02cm,arrowsize=0.05291667cm 2.0,arrowlength=1.4,arrowinset=0.4]{->}(1.82,-1.61)(3.4,-2.03)
\psline[linewidth=0.02cm,arrowsize=0.05291667cm 2.0,arrowlength=1.4,arrowinset=0.4]{->}(1.7,-2.25)(3.66,1.73)
\psline[linewidth=0.02cm,arrowsize=0.05291667cm 2.0,arrowlength=1.4,arrowinset=0.4]{->}(1.72,-2.31)(3.64,0.75)
\psline[linewidth=0.02cm,arrowsize=0.05291667cm 2.0,arrowlength=1.4,arrowinset=0.4]{->}(1.76,-2.33)(3.58,-0.25)
\psline[linewidth=0.02cm,arrowsize=0.05291667cm 2.0,arrowlength=1.4,arrowinset=0.4]{->}(1.8,-2.35)(3.54,-1.21)
\psline[linewidth=0.02cm,arrowsize=0.05291667cm 2.0,arrowlength=1.4,arrowinset=0.4]{->}(3.9,-2.01)(4.92,-2.01)
\psline[linewidth=0.02cm,arrowsize=0.05291667cm 2.0,arrowlength=1.4,arrowinset=0.4]{->}(3.92,-0.99)(4.94,-0.99)
\psline[linewidth=0.02cm,arrowsize=0.05291667cm 2.0,arrowlength=1.4,arrowinset=0.4]{->}(3.96,-0.03)(4.98,-0.03)
\psline[linewidth=0.02cm,arrowsize=0.05291667cm 2.0,arrowlength=1.4,arrowinset=0.4]{->}(3.96,0.97)(4.98,0.97)
\psline[linewidth=0.02cm,arrowsize=0.05291667cm 2.0,arrowlength=1.4,arrowinset=0.4]{->}(3.98,2.01)(5.0,2.01)
\psline[linewidth=0.02cm,arrowsize=0.05291667cm 2.0,arrowlength=1.4,arrowinset=0.4]{->}(5.5,1.93)(6.74,1.55)
\psline[linewidth=0.02cm,arrowsize=0.05291667cm 2.0,arrowlength=1.4,arrowinset=0.4]{->}(5.46,1.87)(6.74,0.65)
\psline[linewidth=0.02cm,arrowsize=0.05291667cm 2.0,arrowlength=1.4,arrowinset=0.4]{->}(5.4,1.81)(6.8,-0.35)
\psline[linewidth=0.02cm,arrowsize=0.05291667cm 2.0,arrowlength=1.4,arrowinset=0.4]{->}(5.36,1.77)(6.82,-1.31)
\psline[linewidth=0.02cm,arrowsize=0.05291667cm 2.0,arrowlength=1.4,arrowinset=0.4]{->}(5.44,0.87)(6.74,-0.39)
\psline[linewidth=0.02cm,arrowsize=0.05291667cm 2.0,arrowlength=1.4,arrowinset=0.4]{->}(5.44,0.83)(6.78,-1.35)
\psline[linewidth=0.02cm,arrowsize=0.05291667cm 2.0,arrowlength=1.4,arrowinset=0.4]{->}(5.5,1.05)(6.74,1.35)
\psline[linewidth=0.02cm,arrowsize=0.05291667cm 2.0,arrowlength=1.4,arrowinset=0.4]{->}(5.46,-0.11)(6.76,-1.39)
\psline[linewidth=0.02cm,arrowsize=0.05291667cm 2.0,arrowlength=1.4,arrowinset=0.4]{->}(5.46,-0.07)(6.76,-0.57)
\psline[linewidth=0.02cm,arrowsize=0.05291667cm 2.0,arrowlength=1.4,arrowinset=0.4]{->}(5.42,-1.09)(6.76,-1.49)
\psline[linewidth=0.02cm,arrowsize=0.05291667cm 2.0,arrowlength=1.4,arrowinset=0.4]{->}(5.46,-1.93)(6.76,-1.61)
\psline[linewidth=0.02cm,arrowsize=0.05291667cm 2.0,arrowlength=1.4,arrowinset=0.4]{->}(1.8,-2.39)(3.38,-2.07)
\end{pspicture} 
}
\caption{The network $\nmas$ with $\M$ given in Fig.~\ref{fig:dmsi_example} and $\bas$ 
given in Fig.~\ref{mat:Astar}. The (side information) edges from $s_i$ to $t_j$ are depicted in 
a separate figure (Fig.~\ref{fig:si_edges}) for a clearer view}
\label{fig:network_example}
\vspace{-15pt}
\end{figure} 
\begin{figure}[H]
\centering
\scalebox{0.8} 
{
\begin{pspicture}(0,-2.79)(6.3228126,2.79)
\pscircle[linewidth=0.04,dimen=outer](0.4509375,2.5){0.29}
\usefont{T1}{ptm}{m}{n}
\rput(0.43234375,2.5){$s_1$}
\pscircle[linewidth=0.04,dimen=outer](0.4509375,1.5){0.29}
\usefont{T1}{ptm}{m}{n}
\rput(0.43234375,1.5){$s_2$}
\pscircle[linewidth=0.04,dimen=outer](0.4509375,0.5){0.29}
\usefont{T1}{ptm}{m}{n}
\rput(0.43234375,0.5){$s_3$}
\pscircle[linewidth=0.04,dimen=outer](0.4509375,-0.5){0.29}
\usefont{T1}{ptm}{m}{n}
\rput(0.43234375,-0.5){$s_4$}
\pscircle[linewidth=0.04,dimen=outer](0.4309375,-1.48){0.29}
\usefont{T1}{ptm}{m}{n}
\rput(0.41234374,-1.48){$s_5$}
\pscircle[linewidth=0.04,dimen=outer](0.4109375,-2.5){0.29}
\usefont{T1}{ptm}{m}{n}
\rput(0.39234376,-2.5){$s_6$}
\pscircle[linewidth=0.04,dimen=outer](5.8509374,1.48){0.29}
\usefont{T1}{ptm}{m}{n}
\rput(5.8323436,1.48){$t_1$}
\pscircle[linewidth=0.04,dimen=outer](5.8509374,0.5){0.29}
\usefont{T1}{ptm}{m}{n}
\rput(5.8323436,0.5){$t_2$}
\pscircle[linewidth=0.04,dimen=outer](5.8709373,-0.52){0.29}
\usefont{T1}{ptm}{m}{n}
\rput(5.8723435,-0.52){$t_3$}
\pscircle[linewidth=0.04,dimen=outer](5.8709373,-1.5){0.29}
\usefont{T1}{ptm}{m}{n}
\rput(5.8523436,-1.5){$t_4$}
\psline[linewidth=0.02cm,arrowsize=0.05291667cm 2.0,arrowlength=1.4,arrowinset=0.4]{->}(0.7209375,2.47)(5.6009374,1.49)
\psline[linewidth=0.02cm,arrowsize=0.05291667cm 2.0,arrowlength=1.4,arrowinset=0.4]{->}(0.7009375,0.51)(5.5809374,1.37)
\psline[linewidth=0.02cm,arrowsize=0.05291667cm 2.0,arrowlength=1.4,arrowinset=0.4]{->}(0.6809375,-1.41)(5.6809373,1.27)
\psline[linewidth=0.02cm,arrowsize=0.05291667cm 2.0,arrowlength=1.4,arrowinset=0.4]{->}(0.6809375,-2.45)(5.7209377,1.25)
\psline[linewidth=0.02cm,arrowsize=0.05291667cm 2.0,arrowlength=1.4,arrowinset=0.4]{->}(0.7009375,2.37)(5.6009374,0.61)
\psline[linewidth=0.02cm,arrowsize=0.05291667cm 2.0,arrowlength=1.4,arrowinset=0.4]{->}(0.7009375,1.43)(5.6209373,0.49)
\psline[linewidth=0.02cm,arrowsize=0.05291667cm 2.0,arrowlength=1.4,arrowinset=0.4]{->}(0.7009375,0.43)(5.6009374,0.41)
\psline[linewidth=0.02cm,arrowsize=0.05291667cm 2.0,arrowlength=1.4,arrowinset=0.4]{->}(0.7409375,-0.49)(5.6409373,0.37)
\psline[linewidth=0.02cm,arrowsize=0.05291667cm 2.0,arrowlength=1.4,arrowinset=0.4]{->}(0.7009375,-1.45)(5.6809373,0.33)
\psline[linewidth=0.02cm,arrowsize=0.05291667cm 2.0,arrowlength=1.4,arrowinset=0.4]{->}(0.7009375,0.37)(5.6409373,-0.47)
\psline[linewidth=0.02cm,arrowsize=0.05291667cm 2.0,arrowlength=1.4,arrowinset=0.4]{->}(0.7409375,-0.53)(5.6009374,-0.55)
\psline[linewidth=0.02cm,arrowsize=0.05291667cm 2.0,arrowlength=1.4,arrowinset=0.4]{->}(0.6809375,-2.45)(5.6609373,-0.67)
\psline[linewidth=0.02cm,arrowsize=0.05291667cm 2.0,arrowlength=1.4,arrowinset=0.4]{->}(0.7209375,-0.59)(5.6009374,-1.47)
\end{pspicture} 
}
\caption{The side information edges of the network $\nmas$ in Fig.~\ref{fig:network_example}}
\label{fig:si_edges}
\vspace{-10pt}
\end{figure}

The network $\nma$ is called \emph{solvable} if the source $s$ is able to multicast
$n$ packets to all $k$ sinks simultaneously by using a linear coding scheme 
(see \cite{KoetterMedard2003}). 

\begin{lemma} 
\label{lem:equivalence}
Suppose that $\bA$ is an $m \times k$ binary matrix. 
Then there exists a feasible multicast scheme $\pa$ with $|P| = m$
for $\M$ if and only if the network $\nma$ is solvable. 
\end{lemma}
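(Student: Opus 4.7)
The plan is to exhibit an explicit translation between feasible multicast schemes and linear network coding solutions on $\nma$, in both directions. The key observation that drives everything is structural: since the edge $(u_h,v_h)$ has capacity one, any linear network code on $\nma$ must send a single $\fq$-symbol across it, and this symbol is a fixed linear combination of $x_1,\ldots,x_n$ (because $u_h$ receives only the $x_i$'s, through the infinite-capacity edges $(s_i,u_h)$). That symbol is precisely the broadcast packet $p_h$ to be used at the base station, and it is copied out along every edge $(v_h,t_j)$ with $a_{h,j}=1$. So the assignment matrix $\bA$ exactly encodes the subgraph through which coded traffic reaches each sink.

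For the ``only if'' direction, assume $(P,\bA)$ is a feasible multicast scheme with $|P|=m$. I would construct a linear code on $\nma$ by: sending $x_i$ across $(s,s_i)$, forwarding $x_i$ unchanged along every $(s_i,t_j)$ with $x_i\in H_j$ and every $(s_i,u_h)$, producing the linear combination $p_h$ at $u_h$ and sending it across $(u_h,v_h)$ and then $(v_h,t_j)$ whenever $a_{h,j}=1$. The symbols arriving at sink $t_j$ are exactly $\{x_i : x_i\in H_j\}\cup\{p_h : a_{h,j}=1\}$, which is precisely the information available to client $C_j$ in the multicast scheme. Feasibility of $(P,\bA)$ says $C_j$ can recover $X\setminus H_j$ linearly from these, so $t_j$ decodes all $n$ packets, and $\nma$ is solvable.

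For the ``if'' direction, assume $\nma$ admits a linear multicast solution. I would read off the global coding coefficient vector carried by each edge $(u_h,v_h)$; write this message as $p_h=\sum_{i=1}^n\alpha_{h,i}x_i$, and set $P=\{p_1,\ldots,p_m\}$. Any message leaving $v_h$ is a scalar multiple of $p_h$, so without loss of generality we may take $v_h$ to forward $p_h$ itself on every outgoing edge; similarly, the traffic on $(s_i,t_j)$ can be taken to be a scalar multiple of $x_i$. Hence the information reaching $t_j$ lies in the span of $\{x_i : x_i\in H_j\}\cup\{p_h : a_{h,j}=1\}$, and solvability forces this span to contain all of $x_1,\ldots,x_n$. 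That is exactly the feasibility condition for $(P,\bA)$.

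The only step that requires a little care is the reduction in the backward direction to the canonical form where $v_h$ sends $p_h$ on each outgoing edge and $s_i$ sends $x_i$ on each outgoing edge; this is standard for unit-capacity edges carrying a one-dimensional message (any nonzero scalar multiple can be absorbed into the downstream coding coefficients), but it is worth stating explicitly so that the side-information and broadcast symbols the sinks receive are literally the $x_i$ and $p_h$ of the multicast scheme rather than arbitrary scalar multiples. Once that normalization is in place, the equivalence between ``$t_j$ decodes $x_1,\ldots,x_n$'' and ``$C_j$ recovers $X\setminus H_j$'' is immediate, and the lemma follows.
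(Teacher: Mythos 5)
Your proposal is correct and follows essentially the same route as the paper: the identical explicit forwarding construction in the forward direction, and the same read-off of $p_h$ from the unit-capacity edge $(u_h,v_h)$ in the converse. The one normalization you flag---reducing to the case where $s_i$ holds $x_i$ itself---is handled in the paper by applying an invertible linear transformation at the source, which is slightly more than absorbing nonzero scalars (a priori $s$ could send an arbitrary linear combination of the $x_i$'s to $s_i$), but this is the standard fix and your argument goes through unchanged.
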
 
\begin{proof}
Assume that there exists a feasible multicast scheme $\pa$ with $|P| = m$
for $\M$. Let $P = \{p_1, \ldots, p_m\}$. 
Then $s$ can multicast $n$ packets $x_1, \ldots, x_n$ to all sinks
in $\nma$ simultaneously using the following coding scheme:
\begin{itemize}
	\item $s$ sends $x_i$ to $s_i$ for every $i \in [n]$, 
	\item $s_i$ $(i \in [n])$ sends $x_i$ to $t_j$ $(j \in [k])$ 
	if they are adjacent,
	\item $s_i$ sends $x_i$ to $u_h$ for every $i \in [n]$ and $h \in [m]$, 
	\item $u_h$ sends $p_h$ to $v_h$ for every $h \in [m]$, 
	\item $v_h$ $(h \in [m])$ sends $p_h$ to $t_j$ $(j \in [k])$ if they are adjacent. 
\end{itemize}
Conversely, assume that the network $\nma$ is solvable. 
By definition, there is a coding scheme so that $s$ can 
multicast $n$ packets to all sinks simultaneously.  
By applying an invertible linear transformation if necessary, 
we can suppose that $s$ sends $x_i$ to $s_i$ for every $i \in [n]$. 
For each $h \in [m]$, let $p_h$ be the packet transmitted on the edge $(u_h, v_h)$. 
Then it is straightforward that $\pa$ where $P = \{p1, \ldots, p_m\}$
is a feasible multicast scheme for $\M$.  
\end{proof} 
\vskip 5pt 

For the instance $\M = \nhj$, for each $j \in [k]$ let
$w_j = n - |H_j|$
denote the number of missing original packets of the client $C_j$.
Let $\weight(\bA[j])$ denotes the number of $1$-entries in
the $j$th column of $\bA$.  
A necessary and sufficient condition for the feasibility of a 
multicast scheme is presented in the following lemma. 
We show that it is possible to satisfy demands from all clients 
if and only if each client receives as many broadcast packets as 
its missing original packets.  

\begin{lemma} 
\label{lem:feasibility}
Suppose that $\bA$ is an $m \times k$ binary matrix. 
Then there exists a feasible multicast scheme $\pa$ with $|P| = m$
for $\M$ if and only if $\weight(\bA[j]) \geq w_j$ for every $j \in [k]$.  
\end{lemma}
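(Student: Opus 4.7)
The plan is to apply Lemma~\ref{lem:equivalence} and reduce the claim to proving that the network $\nma$ is solvable if and only if $\weight(\bA[j]) \geq w_j$ for every $j \in [k]$. By the classical multicast capacity theorem (achievable by linear codes, see \cite{KoetterMedard2003}), $\nma$ is solvable iff the min-cut from the source $s$ to every sink $t_j$ is at least $n$. So it suffices to compute this min-cut in terms of $H_j$ and the $j$th column of $\bA$.

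I would compute the min $s$--$t_j$ cut directly. The edges $(s_i, t_j)$ with $x_i \in H_j$ and every $(s_i, u_h)$ have infinite capacity, so any finite cut $(S,T)$ with $s \in S$ and $t_j \in T$ must place $s_i$ in $T$ whenever $x_i \in H_j$, and must place all $u_h$ in $S$ whenever at least one $s_i$ lies in $S$. A short case analysis then shows that the minimum finite cut has capacity exactly $|H_j| + \weight(\bA[j])$: the $|H_j|$ edges $(s, s_i)$ with $x_i \in H_j$ are always cut, and for each $h$ the unit-capacity pair $(u_h, v_h)$, $(v_h, t_j)$ contributes $1$ precisely when $a_{h,j} = 1$, since either $v_h \in T$ forces cutting $(u_h, v_h)$, or $v_h \in S$ forces cutting $(v_h, t_j)$. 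Combined with the trivial cut $\{s\}$ of capacity $n$, the min-cut equals $\min\bigl(n,\, |H_j| + \weight(\bA[j])\bigr)$, which is $\geq n$ iff $\weight(\bA[j]) \geq n - |H_j| = w_j$.

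Combining the min-cut formula for every $j$ with the multicast capacity theorem and Lemma~\ref{lem:equivalence} then yields the stated equivalence. The main obstacle is the min-cut computation itself, specifically verifying that one cannot obtain a cheaper cut by sending some $u_h$ to $T$ while a missing-packet node $s_i$ remains in $S$; this requires carefully tracking how the infinite-capacity edges $(s_i, u_h)$ force the placement of the intermediate nodes. Everything else is routine bookkeeping, and the necessity direction alone also admits a quick rank-based shortcut (each client $C_j$ must recover $w_j$ packets independent of $H_j$ from the linear combinations it receives, forcing $\weight(\bA[j]) \geq w_j$), which can serve as a sanity check on the min-cut computation.
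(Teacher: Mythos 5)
Your proposal is correct and follows essentially the same route as the paper: reduce to solvability of $\nma$ via Lemma~\ref{lem:equivalence}, invoke the multicast min-cut theorem of \cite{KoetterMedard2003}, and show that every $s$--$t_j$ cut has capacity at least $n$ iff $\weight(\bA[j]) \geq w_j$. The paper defers that last cut computation to a supplementary note, whereas you carry it out explicitly; your computation (min-cut $= \min\bigl(n,\, |H_j| + \weight(\bA[j])\bigr)$, with the infinite-capacity edges forcing $s_i \in T$ for $x_i \in H_j$ and all $u_h \in S$ once any $s_i$ stays in $S$) is correct.
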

\vspace{-3pt}
\begin{proof}
The condition that $\weight(\bA[j]) \geq w_j$ for every $j \in [k]$
is equivalent to the condition that every cut between the source $s$
and a sink in $\nma$ has capacity at least $n$. Due to lack of space, 
we provide a separate proof for this statement in \cite{supplement-DMSI}.  
By the well-known
result from multicast network coding~\cite{KoetterMedard2003}, 
the latter is a necessary and sufficient condition for the solvability of $\nma$.
By Lemma~\ref{lem:equivalence}, we finish the proof.   
\end{proof} 
\vskip 3pt

Lemma~\ref{lem:feasibility} implies that if $\pa$
is a feasible multicast scheme for $\B$ then $|P| \geq \max_j w_j$. 
In Section~\ref{subsec:pas}, we construct a feasible multicast scheme 
that employs precisely $\max_j w_j$ broadcast packets.  
 
\section{Optimal Packet Assignment}
\label{sec:optimality}

In this section, we first describe a feasible multicast scheme $\pas$
for a DMSI instance $\M = \nhj$, and then show that this scheme
obtains the minimum total delay among all feasible multicast schemes
for $\M$. 

\subsection{The Multicast Scheme $\pas$}
\label{subsec:pas}

Relabeling the clients if necessary, we assume that 
\vspace{-5pt}
\begin{equation} 
\label{eq:1}
d_1 \geq d_2 \geq \cdots \geq d_k.  
\vspace{-3pt}
\end{equation} 
\vspace{-5pt}
We consider the multicast scheme $\pas$, where
\begin{equation}
\label{eq:P*} 
m^* = |P^*| = \max_{j \in [k]} w_j,
\vspace{-3pt}
\end{equation} 
\vspace{-7pt}
and $A^* = (a^*_{i,j})$ defined as follows
\begin{equation} 
\label{eq:A*}
a^*_{i,j}=
\begin{cases}
1, & \text{ if } 1 \leq i \leq w_j,\\
0, & \text{ if } w_j < i \leq m^*. 
\end{cases}
\vspace{-3pt}
\end{equation}
We already see an example of such an assignment matrix $\bas$ 
in Fig.~\ref{mat:Astar}, where $\M$ is given in Fig.~\ref{fig:dmsi_example}. 

The broadcast packets of $P^*$ can be obtained as follows. 
By (\ref{eq:P*}) and (\ref{eq:A*}), we have $\weight(\bA[j]) = w_j$
for every $j \in [k]$. Therefore, by Lemma~\ref{lem:feasibility}, 
there exist broadcast packets $p_1, \ldots, p_m$ so that $\pas$ with 
$P^* = \{p_1, \ldots, p_{m^*}\}$ is feasible. Moreover, by the proof of
Lemma~\ref{lem:equivalence}, these broadcast packets can be found
in polynomial time in $n$ and $k$, using the algorithm in \cite{Jaggi-et2005}, 
given that $q \geq k$. For example, let $q = 4$ and $\mathbb{F}_4 = \{0,1,\al,\al^2\}$. 
For $\bas$ given in Fig.~\ref{mat:Astar}, the broadcast packets of 
$P^*$ can be chosen as follows: $p_1 = \al x_3 + x_4 + \al^2x_5 + \al x_6$, 
$p_2 = x_1 + x_2 + \al^2 x_3 + \al x_4 + x_5 + x_6$, 
$p_3 = \al x_1 + \al^2 x_2 + x_3 + \al x_4 + x_5 + \al^2 x_6$, 
$p_4 = x_1 + \al^2 x_3 + \al x_4 + \al^2 x_6$,
$p_5 = \al^2 x_1 + \al x_2 + x_3 + \al x_4 + x_5$.      
   
\subsection{The Optimality of $\pas$}

Now we prove the optimality of $\pas$ in terms of the total delay. 
Let $\pa$ be an arbitrary feasible multicast scheme for $\M$. 
Our goal is to show that $\dt(\bA) \geq \dt(\bA^*)$. 

By Lemma~\ref{lem:feasibility}, the feasibility of $\pa$ implies that 
$\weight(\bA[j]) \geq w_j$ for every $j \in [k]$. Since flipping a $
1$-entry into a $0$-entry does
not increase $\dt(\bA)$, we may assume that $\weight(\bA[j]) = w_j$ 
for every $j \in [k]$. In Lemma~\ref{lem:optimality}, we show that the total
delay of $\pa$ is not smaller than that of $\pas$. 
First, we illustrate the idea of Lemma~\ref{lem:optimality} via an example. 

Consider the DMSI instance $\M$ given in Fig.~\ref{fig:dmsi_example}
together with the delays from the clients given in Fig.~\ref{fig:delay-table}.
Let $\bA$ and $\bas$ be the assignment matrices given in Fig.~\ref{fig:matrices}.  
We now show that $\dt(\bA) \geq \dt(\bas)$ using an algorithmic approach. 
We modify $\bA$ through several steps so that finally, $\bA$ is turned into
$\bas$. Moreover, in every step, $\dt(\bA)$ is never increased. 

\nin{\bf \underline{Step 1}.} We permute the second and the third row of $\bA$. Obviously, 
$\dt(\bA)$ remains unchanged. The matrix now is given in Fig.~\ref{fig:step1}.
We can see that the first columns of $\bA$ and $\bas$ are now the same.

\nin{\bf \underline{Step 2}.} We shift the only $1$-entry in the second column of $\bA$
all the way up to the first row, by swapping $a_{1,2}$ and $a_{3,2}$. 
The matrix now is given in Fig.~\ref{fig:step2}.
As $d_1 \geq d_2$, the broadcast packet $p_1$, which corresponds to the first row of $\bA$, 
still remains to be $d_1$ after the aforementioned swap.
As $a_{3,2}$ is now zero, the delay of the third packet is decreased to $d_4 \leq d_2$. 
These are the only changes in the total delay of $\bA$ after this step. 
Therefore, $\dt(\bA)$ is not increased (in fact, it is decreased by $3$ seconds).
Now the first two rows of $\bA$ and $\bas$ are the same. 

\nin{\bf \underline{Step 3}.}  We first swap $a_{2,3}$ and $a_{4,3}$. The delay of the second 
broadcast packet is still $d_1$ after the swap. The delay of the forth broadcast packet, from $d_3$, 
is now decreased to $d_4 \leq d_3$. Next, we swap the third and the fifth row of
$\bA$. The total delay of $\bA$ is unchanged. 
The matrix now is given in Fig.~\ref{fig:step3}.
The first three rows of $\bA$ and $\bas$ are the same. 
Their forth rows are also identical. 
\vspace{-10pt}
\begin{figure}[H]
\subfloat[$\bA$ after Step 1]{
$
\begin{pmatrix}
1 & 0 & 1 & 1\\
1 & 0 & 0 & 1\\
0 & 1 & 0 & 1\\
0 & 0 & 1 & 1\\
0 & 0 & 1 & 1
\end{pmatrix}
$
\label{fig:step1}
}
\subfloat[$\bA$ after Step 2]{ 
$
\begin{pmatrix}
1 & 1 & 1 & 1\\
1 & 0 & 0 & 1\\
0 & 0 & 0 & 1\\
0 & 0 & 1 & 1\\
0 & 0 & 1 & 1
\end{pmatrix}
$
\label{fig:step2}
}
\subfloat[$\bA$ after Step 3]{ 
$
\begin{pmatrix}
1 & 1 & 1 & 1\\
1 & 0 & 1 & 1\\
0 & 0 & 1 & 1\\
0 & 0 & 0 & 1\\
0 & 0 & 0 & 1
\end{pmatrix}
$
\label{fig:step3}
}
\caption{$\bA$ is turned into $\bas$ in three steps}
\end{figure}
\vspace{-15pt}

\begin{lemma}
\label{lem:optimality}
Let $\bA$ be an $m \times k$ binary matrix where $\weight(\bA[j]) = w_j$ 
for every $j \in [k]$. Then $\dt(\bA) \geq \dt(\bA^*)$. 
\end{lemma}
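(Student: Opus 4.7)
The plan is to compare the sorted multisets of row delays of $\bA$ and $\bas$ termwise. Because $\dt(\bA)$ depends only on the multiset of row delays, I may first permute the rows of $\bA$ so that its row delays are in non-increasing order; denote them $\delta_1 \geq \delta_2 \geq \cdots \geq \delta_m$, where $\delta_i = \dla(p_i)$ after reindexing. Let $\delta^*_i$ denote the $i$th row delay of $\bas$. Once I verify that $\delta^*_i$ is already non-increasing in $i$, and that $\delta_i \geq \delta^*_i$ for every $i \in [m^*]$, the conclusion follows from
\[
\dt(\bA) \;=\; \sum_{i=1}^m \delta_i \;\geq\; \sum_{i=1}^{m^*} \delta_i \;\geq\; \sum_{i=1}^{m^*} \delta^*_i \;=\; \dt(\bas).
\]
The first inequality just uses that row delays are non-negative, and the existence of the required $m^* \leq m$ rows in $\bA$ is guaranteed by $m \geq w_j$ for every $j$.

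The sorted structure of $\bas$ is immediate from its definition. Row $i$ of $\bas$ has a $1$ in column $j$ exactly when $w_j \geq i$, so $\delta^*_i = \max\{d_j : w_j \geq i\} = d_{j^*(i)}$, where $j^*(i) := \min\{j : w_j \geq i\}$; the second equality uses $d_1 \geq \cdots \geq d_k$. As $i$ grows, the set $\{j : w_j \geq i\}$ only shrinks, so $j^*(i)$ is non-decreasing in $i$ and hence $\delta^*_i$ is non-increasing in $i$. Therefore the row delays of $\bas$ as listed are already sorted in non-increasing order and can be compared directly against the sorted $\delta_i$'s.

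The heart of the argument is the termwise inequality $\delta_i \geq \delta^*_i$. Fix $i \in [m^*]$ and set $j^* := j^*(i)$, so that $\delta^*_i = d_{j^*}$ and $w_{j^*} \geq i$. Since $\weight(\bA[j^*]) = w_{j^*} \geq i$, at least $i$ rows of $\bA$ contain a $1$ in column $j^*$, and each such row $i'$ has $\dla(p_{i'}) \geq d_{j^*} = \delta^*_i$. Consequently, at least $i$ rows of $\bA$ have delay $\geq \delta^*_i$, which forces the $i$th largest row delay to satisfy $\delta_i \geq \delta^*_i$, as needed.

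I do not anticipate a serious obstacle; the main temptation worth avoiding is the iterative row-exchange strategy suggested by the worked example (repeatedly ``shift $1$-entries in each column upward''), which works but requires careful bookkeeping to ensure the total delay never increases at each step. The sorted-delay/counting argument above bypasses such case analysis entirely and produces the inequality in one shot.
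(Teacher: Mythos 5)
Your proof is correct, and it takes a genuinely different route from the paper. The paper proceeds algorithmically: it transforms $\bA$ into $\bA^*$ column by column through $k+1$ steps (row permutations and entry swaps that push $1$-entries upward), arguing via the case analysis (M1)--(M3), (C1), (C2) that no step increases the total delay. You instead give a one-shot counting argument: writing $j^*(i)=\min\{j: w_j\ge i\}$, the $i$th row delay of $\bA^*$ is $\delta^*_i=d_{j^*(i)}$, and since column $j^*(i)$ of $\bA$ contains $w_{j^*(i)}\ge i$ ones, at least $i$ rows of $\bA$ have delay at least $d_{j^*(i)}$, so the $i$th largest row delay of $\bA$ dominates $\delta^*_i$; summing gives the claim. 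This is a termwise domination of sorted delay sequences, which is shorter, avoids all the bookkeeping about preserving earlier columns, and as a bonus essentially hands you the closed-form expression for $\dt(P^*,\bA^*)$ in Theorem~\ref{thm:main} (the multiplicity of $d_j$ among the $\delta^*_i$ is exactly $\max\{0, w_j-\max_{j'<j}w_{j'}\}$). What the paper's approach buys in exchange is an explicit sequence of local, delay-non-increasing moves from an arbitrary feasible assignment to the optimal one, which is more transparent as an algorithmic recipe. One shared implicit convention you both rely on: a row of $\bA$ with no $1$-entries must be assigned delay $0$ (rather than $\max\emptyset=-\infty$) for your first inequality $\sum_{i=1}^m\delta_i\ge\sum_{i=1}^{m^*}\delta_i$ to hold; this matches the paper's Step $k+1$, where deleting all-zero rows is said to leave the total delay unchanged, but it is worth stating explicitly.
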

\vspace{-3pt}
\begin{proof}
Since $\weight(\bA[j]) = w_j$ for all $j \in [k]$, we have 
\vspace{-3pt}
\[
m \geq \max_{j \in [k]} w_j = m^*.
\vspace{-3pt}
\]  
The idea is to repeatedly modify the matrix $\bA$ through $k+1$ steps, so that
at each step, the total delay of $\bA$ is not increased. At the final step, 
$\bA$ is turned into $\bas$. As the total delay never goes up during the
whole process, we conclude that $\dt(\bA) \geq \dt(\bA^*)$. 

Hereafter we say that the two column vectors $\bu \in \fq^m$ and $\bv \in \fq^{m^*}$
are \emph{almost identical} if their first $m^*$ coordinates are identical 
and the last $m-m^*$ coordinates of $\bu$ are all zeros. 

\nin {\bf \underline{Step 1}.} As $\weight(\bA) = \weight(\bas) = w_1$, we can permute
the rows of $\bA$ (if necessary) so that the first columns of $\bA$ and $\bas$ 
are almost identical. 
As permuting rows does not affect the total delay, after 
Step~1, the total delay of $\bA$ remains the same. 

\nin {\bf \underline{Step $j$} $(2 \leq j \leq k)$.} Suppose that up to Step $j-1$, the first
$j-1$ columns of $\bA$ and $\bas$ are almost identical. 
In this step, we modify $\bA$ so that the $j$th columns of $\bA$ and $\bas$
become almost identical. Intuitively, we shift all of the $1$-entries
in the $j$th column of $\bA$ upward as much as we can, and prove that 
during the process, the total delay of $\bA$ is not increased.  
Let
\vspace{-2pt} 
\[
U(j) = \{i \in [m]: \ \exists j' < j \text{ s.t. } a_{i,j'} = 1\},\ L(j) = [m] \setminus U(j).
\]
In words, $U(j)$ denotes the set of upper rows of $\bA$, each of these
contains at least a $1$-entry that is located within the first $j-1$ columns.
Note that $U(j)$ consists of the first $|U(j)|$ rows of $\bA$. 
In opposite, $L(j)$ denotes the set of remaining lower rows of $\bA$, 
where all entries in these rows that are located within the first $j-1$ columns
are zeros. The following modifications to $\bA$ do not increase $\dt(\bA)$
and at the same time, keep the first $j-1$ columns of $\bA$ unchanged. 
\begin{enumerate}
	\item[(M1)] Modify the entries in the $j$th column that are located within 
the first $|U(j)|$ rows. 
As the first $j-1$ columns of $\bA$ and $\bas$ are almost identical, the 
delays (w.r.t $\bA$) of the first $|U(j)|$ broadcast packets are from the set
$\{d_1, \ldots, d_{j-1}\}$. Since $d_j \leq d_{j'}$ for all $j' < j$, any 
change in the $j$th column within the first $|U(j)|$ rows does not 
affect the delays of the corresponding packets.
\item[(M2)] Turn a $1$-entry in the $j$th column that are located within
the last $|L(j)|$ rows into a $0$-entry. The delay of the corresponding broadcast
packet is changed from $d_j$ to $d_{j''}$ for some $j'' > j$. As $d_{j''}
\leq d_j$, the packet delay is not increased. 
\item[(M3)] Permute rows in $L(j)$. It is obvious that permuting rows 
in $\bA$ does not affect $\dt(\bA)$. Moreover, by definition of $U(j)$
and $L(j)$, permuting rows within $L(j)$ does not affect the first 
$j-1$ columns of $\bA$.  
\end{enumerate}
With (M1), (M2), and (M3) in mind, we now apply some modifications to 
$\bA$.
Within the first $|U(j)|$ rows, in the $j$ column of $\bA$, we swap 
pairs of $0$- and $1$-entries such that the $0$-entries are below all the
$1$-entries. Due to (M1),  
$\dt(\bA)$ remains unchanged. We next consider two cases. 
\begin{enumerate}
	\item[(C1)] The $j$th column of $\bA$ has no $1$-entries in the last
	$|L(j)|$ rows. Then we are done for Step $j$ since now the $j$th column of 
	$\bA$ is already almost identical to that of $\bas$. 
	\item[(C2)] The $j$th column of $\bA$ has some $1$-entries in the last
	$|L(j)|$ rows. We now examine only the entries in the $j$th column of $\bA$. 
	\begin{enumerate}
	\item If there are as many $0$-entries in the upper part $U(j)$ as $1$-entries in 
	the lower part $L(j)$ then we can shift the $1$-entries all the way up by 
	applying appropriate entry swaps; 
	and doing so makes the $j$th columns of $\bA$ and $\bas$ almost identical.
	By (M1) and (M2), $\dt(\bA)$ is not increased.  
	\item If there are fewer $0$-entries in the upper part $U(j)$ than $1$-entries
	in the lower part $L(j)$, we first shift as many as we can the $1$-entries from 
	$L(j)$ to $U(j)$; then all entries in $U(j)$ are one. By (M1) and (M2), $\dt(\bA)$ is not increased. 
	Finally, we permute rows in $L(j)$ so that in the $j$th column of $\bA$, 
	the $1$-entries lie above all the $0$-entries. Then the $j$th columns of $\bA$
	and $\bas$ are almost identical. Moreover, by (M3), $\dt(\bA)$ is unchanged. 
	\end{enumerate}
\end{enumerate}
         
\nin{\bf \underline{Step $k+1$}.} The previous $k$ steps guarantee that $\dt(\bA)$ is not increased
and all $k$ columns of $\bA$ are almost identical to that of $\bas$. Therefore,
the last $m^*-m$ rows of $\bA$ are all-zeros. 
In this step, we remove the last $m^*-m$ rows of $\bA$, to turn $\bA$
into $\bas$. Certainly, $\dt(\bA)$ remains unchanged in this step.    
\end{proof} 

\begin{theorem}
\label{thm:main}
The multicast scheme $\pas$ obtains the minimum total delay among
all feasible multicast schemes for $\M = \nhj$. Moreover, 
\vspace{-3pt} 
\begin{equation}
\label{eq:dpas}
\dt\pas = \sum_{j = 1}^k d_j \times \max\big\{0, w_j - \max\{w_j'\}_{0 \leq j' < j}\big\},
\vspace{-3pt} 
\end{equation} 
where we set $w_0 = 0$.
\end{theorem}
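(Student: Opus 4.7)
The statement has two separate pieces to establish, and the first follows almost immediately from what has been done. Let $\pa$ be any feasible multicast scheme for $\M$. By Lemma~\ref{lem:feasibility}, feasibility forces $\weight(\bA[j]) \geq w_j$ for every $j \in [k]$. Flipping any $1$-entry of $\bA$ to a $0$-entry can only shrink $\dla(p_i) = \max\{d_j : a_{i,j}=1\}$, hence can only reduce $\dt(\bA)$; so we may assume $\weight(\bA[j]) = w_j$ throughout, at the cost of lowering the total delay. Lemma~\ref{lem:optimality} then yields $\dt(\bA) \geq \dt(\bAs)$, proving optimality.

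The real work is the closed-form expression \eqref{eq:dpas}, which I plan to obtain by a direct calculation from the definition of $\bAs$. The key observation is monotonicity: since $d_1 \geq d_2 \geq \cdots \geq d_k$, for each row $i \in [m^*]$ the delay $\dlas(p_i) = \max\{d_j : a^*_{i,j}=1\}$ is attained at the \emph{smallest} index $j$ with $a^*_{i,j}=1$. By definition of $\bAs$ in \eqref{eq:A*}, this smallest index is $j^*(i) = \min\{j \in [k] : i \leq w_j\}$, so $\dlas(p_i) = d_{j^*(i)}$.

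Next I would rewrite $\dt(\bAs) = \sum_i d_{j^*(i)}$ by grouping rows according to the value of $j^*(i)$. A row $i$ satisfies $j^*(i) = j$ precisely when $i \leq w_j$ and $i > w_{j'}$ for every $j' < j$; equivalently, $\max_{0 \leq j' < j} w_{j'} < i \leq w_j$ (using the convention $w_0 = 0$). The number of such rows is therefore
\[
N_j \;=\; \max\bigl\{0,\; w_j - \max\{w_{j'}\}_{0 \leq j' < j}\bigr\}.
\]
Summing over $j$ collects every $i \in [m^*]$ exactly once, because every row $i \leq m^* = \max_j w_j$ has some $j$ with $i \leq w_j$ and so a well-defined smallest such $j$. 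This gives $\dt(\bAs) = \sum_{j=1}^k d_j N_j$, which is exactly \eqref{eq:dpas}.

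I do not foresee a genuine obstacle: optimality is a direct corollary of Lemma~\ref{lem:optimality}, and the formula is a bookkeeping exercise whose only subtle point is the monotonicity argument identifying $\dlas(p_i)$ with $d_{j^*(i)}$ and the correct handling of the $\max\{0,\cdot\}$ (needed because $w_j$ need not be monotone in $j$, so some rows may contribute nothing new beyond what earlier columns already forced).
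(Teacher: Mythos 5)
Your proposal is correct and follows essentially the same route as the paper: optimality is deduced directly from Lemma~\ref{lem:optimality} (after the weight-reduction step), and the formula \eqref{eq:dpas} is obtained by counting, for each $j$, the rows of $\bAs$ assigned to $C_j$ but to no $C_{j'}$ with $j'<j$ --- which is exactly the set of rows $i$ with $j^*(i)=j$ in your notation. Your write-up is somewhat more explicit than the paper's (in particular, the observation that monotonicity of the $d_j$ makes $\dlas(p_i)$ equal to $d_{j^*(i)}$ for the \emph{smallest} qualifying index), but it is the same argument.
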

\begin{proof}  
The first assertion follows by Lemma~\ref{lem:optimality}. 
To prove that (\ref{eq:dpas}) holds, we show that there are 
\vspace{-3pt} 
\[
\max\big\{0, w_j - \max\{w_j'\}_{0 \leq j' < j}\big\}
\vspace{-3pt} 
\] 
broadcast packets that have delay $d_j$. Obviously, the first 
\vspace{-3pt} 
\[
w_1 = \max\{0, w_1-\max\{w_0\}\}
\vspace{-3pt} 
\]
broadcast packets have delay $d_1$, due to (\ref{eq:packet_delay}) and (\ref{eq:1}). 
By the definition of $\pas$, for each $j > 1$, there are precisely
\vspace{-3pt} 
\[
\max\big\{0, w_j - \max\{w_j'\}_{0 \leq j' < j}\big\}
\vspace{-3pt} 
\]
broadcast packets that are assigned to $C_j$ but to none of the clients 
$C_{j'}$ with $j' < j$. Due to (\ref{eq:packet_delay}) and (\ref{eq:1}), 
these are the only broadcast packets that have delay $d_j$. 
\end{proof} 

\section{Acknowledgment}
The first author thanks Xiaoli Xu for helpful discussions. 
\bibliographystyle{IEEEtran}
\bibliography{Delay-DMSI}

\end{document}